\newtheorem{theorem}{Theorem}
\newtheorem{lemma}{Lemma}
\renewcommand{\d}{\mathrm d}
\renewcommand{\vec}[1]{\boldsymbol{#1}}
\newcommand{\R}{\mathbb R}
\newcommand{\norm}[1]{\left\| #1 \right\|}
\newcommand{\grad}{\boldsymbol\nabla}
\newcommand{\cross}{\times}
\newcommand{\pder}[2]{\frac{\partial #1}{\partial #2}}
\shorttitle{Fast quadrature in axisymmetry}
\title{A fast, accurate, and easy to implement Kapur-Rokhlin quadrature scheme for singular integrals in axisymmetric geometries}
\author{Evan Toler\aff{1}
  \corresp{\email{evan.toler@cims.nyu.edu}},
  A. J. Cerfon\aff{1}
 \and D. Malhotra\aff{2}}
\affiliation{\aff{1}Courant Institute of Mathematical Sciences, New York University, New York, NY 10012\\
\aff{2}Flatiron Institute, New York, NY 10012}
\begin{document}

\maketitle

\begin{abstract}
Many applications in magnetic confinement fusion require the efficient calculation of surface integrals with singular integrands. The singularity subtraction approaches typically used to handle such singularities are complicated to implement and low order accurate. In contrast, we demonstrate that the Kapur-Rokhlin quadrature scheme is well-suited for the logarithmically singular integrals encountered for a toroidally axisymmetric confinement system, is easy to implement, and is high order accurate. As an illustration, we show how to apply this quadrature scheme for the efficient and accurate calculation of the normal component of the magnetic field due to the plasma current on the plasma boundary, via the virtual casing principle.
\end{abstract}

%*************************************************************************
\section{Introduction}
Integral formulations and integral equations are effective and popular tools for magnetostatic and magnetohydrodynamic problems in magnetic confinement fusion \citep{Shafranov_1972,Zakharov_1973,Freidberg1976,MerkelJCP,HirshmanNEMEC,Hirshman1986,chance1997vacuum,Ludwig_2006,Ludwig_2013,Lazerson_2013,Drevlak_2018,o2018integral,malhotra2019taylor,pustovitov2021analytical}. They have intuitive physical interpretations \citep{Shafranov_1972,Zakharov_1973,Hirshman1986,Lazerson_2013,hanson2015virtual,pustovitov2021analytical}, provide geometric flexibility \citep{MerkelJCP,HirshmanNEMEC,chance1997vacuum,o2018integral,malhotra2019taylor}, and often reduce the dimension of the unknown quantities one solves for, thus reducing the number of unknowns \citep{MerkelJCP,HirshmanNEMEC,chance1997vacuum,o2018integral,malhotra2019taylor}. However, there typically is a price to pay for these advantages. Integral formulations often involve singular integrands, which are subtle to handle numerically \citep{Freidberg1976,MerkelJCP,atkinson_1997,chance1997vacuum,Ludwig_2006,Ludwig_2013,klockner2013quadrature,kress2014linear,ricketson2016accurate,malhotra2019taylor,landremanboozer}. The numerical difficulty of integrating these singular integrands depends on the nature of the singularity, the distribution of sources, and the relative location of the evaluation points (often known as target points or observation points) with respect to the sources. In this article, we will focus on the common situation in which we are trying to evaluate layer potentials at the source locations. This is for example the standard situation when applying Green's identity \citep{Freidberg1976,MerkelJCP,HirshmanNEMEC,pustovitov2008,lee2015tokamak,malhotra2019efficient}.

In the fusion community, the numerical difficulty due to the singularity of the integrand is usually addressed via the method of singularity subtraction \citep{Freidberg1976,MerkelJCP,chance1997vacuum,Ludwig_2006,Ludwig_2013}. The method is robust, but leads to a quadrature scheme with low order convergence. Furthermore, it is complicated to implement, and the chances of making mistakes in the derivation of the quadrature scheme or its numerical implementation are high. The purpose of our work is to demonstrate that for the singular layer potential integrals encountered in axisymmetric confinement devices, which can be reduced to line integrals of singular periodic functions, the Kapur-Rokhlin quadrature scheme \citep{kapur1997high} is as simple to implement as the trapezoidal rule, and is a scheme with high order convergence, leading to low error for few quadrature points. For non-axisymmetric applications, we recently presented an efficient high-order quadrature scheme based on a different approach \citep{malhotra2019efficient}, and alternative schemes may also provide good performance \citep{bruno2020,wu2021}. However, for axisymmetric cases, none of these schemes reduce to as simple and efficient a method as the Kapur-Rokhlin approach we present here.

As discussed above, there are many situations in the study of axisymmetric magnetic confinement fusion devices for which the simplicity and accuracy of the Kapur-Roklin scheme could be demonstrated. For this article, we choose to focus on the evaluation of single-layer and double-layer potentials, which are the layer potentials appearing in Green's identity, and which we will define precisely in section \ref{sec:math_background}. Our first numerical test is a numerical verification of an identity for the double-layer potential. Our second numerical test focuses on the single-layer potential, which we evaluate for an application of the virtual casing principle, to calculate the normal component of the magnetic field due to the plasma current at points on the plasma boundary \citep{Shafranov_1972,Zakharov_1973,hanson2015virtual}. 

The structure of this article is as follows. We introduce our mathematical notation, layer potential representations, and Kapur-Rokhlin quadrature for singular integrals in section \ref{sec:math_background}. In section \ref{sec:virtual_casing}, we give a brief summary of the virtual casing principle, discuss the mathematical difficulties associated with the numerical evaluation of the virtual casing integral, and describe our method for addressing these difficulties in toroidally axisymmetric geometries. We prove in section \ref{sec:analytic} that the integrands we consider in this article are logarithmically singular, and can therefore be integrated to high accuracy with the Kapur-Rokhlin quadrature scheme, which we present in section \ref{sec:KR_vecpot}. We demonstrate the accuracy and high order of convergence of the scheme for an application of the virtual casing principle and for the evaluation of a double-layer potential in section \ref{sec:numerics}, and summarize our work in section \ref{sec:conclusion}.

%*************************************************************************
\section{Mathematical background}\label{sec:math_background}

%*************************************************************************
\subsection{Description of toroidal volumes and surfaces}\label{sec:torsurf_desc}
Throughout our discussion of toroidal geometries, we shall make use of the standard, right-handed cylindrical coordinates $(r, \phi, z)$. At a point with toroidal angle $\phi$, we write the orthonormal unit vectors as $\vec e_r(\phi)$, $\vec e_\phi(\phi)$, and $\vec e_z$. With this notation, we emphasize the fact that the radial and azimuthal unit vectors depend on the toroidal angle. 

In this article, we will focus on axisymmetric geometries, which means that we shall only consider surfaces and volumes of revolution. We take the $z$-axis as the axis of revolution and define a simple closed curve $\gamma$ in the $(r,z)$ plane. By rotating this curve about the $z$-axis through the toroidal angle $\phi \in [0, 2\pi]$, we obtain a closed surface of revolution $\Gamma$. Its interior $\Omega$ is the corresponding volume of revolution. We refer to $\gamma$ as the generating curve of $\Gamma$. It is parameterized by a single variable $t$, which we assume has period $L$. We denote the components of $\gamma$ in the $(r,z)$ plane by $(r(t), z(t))$, and we identify a point $\vec y \in \Gamma$ by  its toroidal revolution angle $\phi$ and its generating curve parameter $t$. Correspondingly, we often write $\vec y = \vec y(\phi, t)$ to stress this parameterization. 
Moreover, we assume that $\gamma$ is a $C^1$ curve which does not intersect the $z$-axis, in the sense that the derivatives $r'(t)$ and $z'(t)$ are continuous on $[0,L]$ and there exists $R_{min}>0$ for which $r(t) \ge R_{min}$ on $[0,L]$.
Finally, we assume that $\gamma$ is oriented so that the vector $\vec n(\vec y(\phi, t)) = (\partial \vec y / \partial \phi) \cross (\partial \vec y / \partial t) / \mathcal J(t)$ is the unit outward normal to $\Omega$ at $\vec y$. The quantity $\mathcal J(t)=\norm{(\partial \vec y / \partial \phi) \cross (\partial \vec y / \partial t)}$ is the Jacobian of the parameterization.

%*************************************************************************
\subsection{Single-layer and double-layer potentials for axisymmetric geometries} \label{sec:layer_pot}
Layer potentials are fundamental tools in representing solutions to the partial differential equations that arise in magnetostatic and magnetohydrodynamic calculations for magnetic confinement fusion \citep{MerkelJCP,chance1997vacuum,Ludwig_2006,Ludwig_2013,landremanboozer,Drevlak_2018}. Given a surface $\Gamma$ and a free-space Green's function $(\vec x, \vec y) \mapsto G(\vec x, \vec y)$ for a partial differential equation, the single layer operator $\mathcal S$ and the double layer operator $\mathcal D$ are defined by \citep{guenther1996partial}
\[
[\mathcal S \sigma](\vec x) = \iint_\Gamma G(\vec x, \vec y) \sigma(\vec y) \d \Gamma(\vec y)
\quad \text{and} \quad
[\mathcal D \sigma](\vec x) = \iint_\Gamma \pder{G(\vec x, \vec y)}{\vec n(\vec y)} \sigma(\vec y) \d \Gamma(\vec y),
\]
respectively. 
In the double layer representation, the quantity $\partial G(\vec x, \vec y) / \partial \vec n(\vec y) = \vec n(\vec y) \vec\cdot \grad_{\vec y} G(\vec x, \vec y)$ is the derivative of the Green's function in the outward normal direction at $\vec y$.
The function $\sigma$ is called the density function in this representation. Functions expressible as single-layer and double-layer potentials automatically satisfy the partial differential equation associated with the Green's function everywhere except the boundary $\Gamma$.

The case of Laplace's equation in three dimensions is particularly prevalent in magnetostatic and magnetohydrodynamic settings \citep{MerkelJCP,chance1997vacuum,landremanboozer,Drevlak_2018}. Here, the Green's function is
\[
G(\vec x, \vec y) = \frac 1{4 \pi \norm{\vec x - \vec y}},
\]
and functions expressible as $\mathcal S \sigma$ or $\mathcal D \sigma$ are harmonic on $\R^3 \setminus \Gamma$. Assuming the density $\sigma$ is also axisymmetric in the sense that $\partial \sigma/ \partial \phi = 0$, one can analytically compute the part of the surface integral over the revolution angle $\phi \in [0, 2\pi]$. The resulting single-layer and double-layer integrals then are one-dimensional line integrals, and the resulting Green's function in the integrand can be expressed in terms of complete elliptic integrals \citep{Ludwig_2006,Ludwig_2013,jardin2010computational}. We shall provide an explicit expression in section \ref{sec:ana_vecpot}, as we treat in detail the application of our method to the calculation of the virtual casing principle. At this point, we just highlight the fact that the singularity in the Green's function when $\vec{x}=\vec{y}$ requires the use of specialized quadrature when the target $\vec{x}$ is located on the surface $\Gamma$ , or regularization methods \citep{Freidberg1976,MerkelJCP,chance1997vacuum,landremanboozer,Drevlak_2018,malhotra2019efficient}. The purpose of this article is to show that for applications in axisymmetric geometries, the Kapur-Rokhlin quadrature scheme is simpler to implement than the known regularization methods used in the magnetic confinement community, and leads to high order convergence.
% For example, the reader can look ahead to equation \eqref{eq:diffnorm} to see that the Green's function for Laplace's equation only depends on $\phi$ through the expression $\cos(\phi_0 - \phi)$.

%*************************************************************************
\subsection{Kapur-Rokhlin quadrature}
The Kapur-Rokhlin quadrature rules \citep{kapur1997high} are a collection of
high order schemes for computing
\[
    \int_0^b f(t) \d t
\]
when $f$ has an integrable singularity at the origin of the form $\log t$ or $t^\lambda$ for $\lambda>-1$.
We will focus on the specific Kapur-Rokhlin scheme for a
logarithmic singularity of the form
$f(t) = p(t) \log t + q(t)$, where $p$ and $q$ 
need not have known formulae, but are assumed sufficiently smooth.

\subsubsection{A Kapur-Rokhlin quadrature scheme for nonperiodic functions}
The Kapur-Rokhlin quadrature rules are corrections to the trapezoidal rule.
For the standard trapezoidal rule with equal spacing $h = b/M$, the quadrature nodes for a nonsingular integrand would be
$t_i = i h$ for $i=0, \dots, M$.
However, we omit the quadrature node $t_0=0$ since the integrand $f$ is singular there.
This yields the punctured trapezoidal rule
\[
    \int_0^b f(t) \d t \approx 
    h \left[ \sum_{i=1}^{M-1} f_i + \frac 12 f_M \right]
\]
where we have written the shorthand $f_i = f(t_i)$.
The punctured trapezoidal rule is low order accurate when $f$ is singular. For example, when $f$ is logarithmically singular at the origin, the punctured trapezoidal rule error typically decays, according to \citet{martinsson2019fast}, as $O(|h\log h|)$.

The Kapur-Rokhlin corrections place additional quadrature nodes 
outside the integration domain $[0,b]$.
Specifically, the corrections depend on a convergence rate parameter $n$ and
a smoothness parameter $m$.
% Here, we align our notation with the original Kapur-Rokhlin literature, but $k$ has no connection to the modulus of elliptic integrals in the previous section.
One chooses $m \ge 3$ an odd integer subject to the constraint that $p$ and $q$ 
are $m$ times continuously differentiable on a wider interval $[-nh,b+mh]$. 
Under these conditions, the Kapur-Rokhlin scheme sets constants 
$\gamma_j$ for $j=\pm1, \dots, \pm n$ and
$\beta_l$ for $l = 1, \dots, (m-1)/2$ and defines the quadrature rule 
\[
    T_{m,n}^M(f) = h \left[\sum_{i=1}^{M-1} f_i + \frac 12 f_M \right]
        + h \sum_{l=1}^{(m-1)/2} \beta_l (f_{M-l} - f_{M+l})
        + h \sum_{1 \le |j| \le n} \gamma_j f_j.
\]
for $M \ge n + (m-1)/2$.
The error obeys the asymptotic rate
\[
    \left| T_{m,n}^M(f) - \int_0^b f(t) \d t \right| = \mathit O(h^n)
\] 
as $h \ttz$.
The endpoint correction coefficients $\{\gamma_j\}$ and $\{\beta_l\}$ are derived by analyzing the Euler-Maclaurin formula for quadrature error and solving a linear system for correction coefficients to obtain high-order accuracy. Figure \ref{fig:example_KR_nodes_weights} illustrates this quadrature scheme and compares it with the punctured trapezoidal rule.

\begin{figure}
    \centering
    \includegraphics[width=1\textwidth]{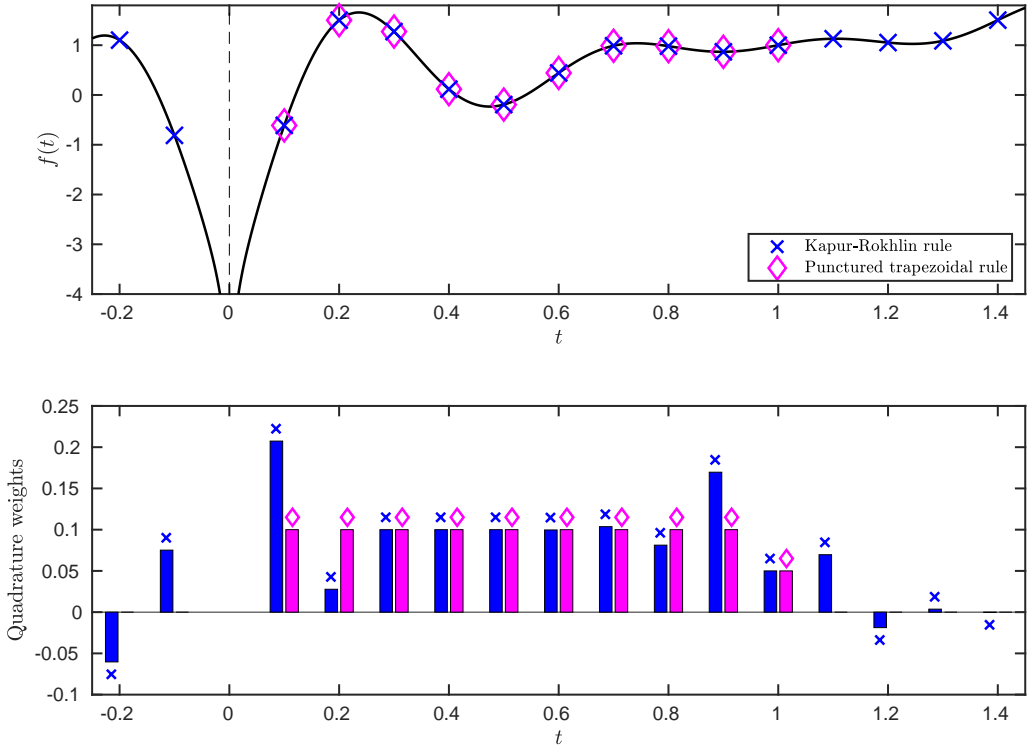}
    \caption{Nodes and weights for the Kapur-Rokhlin and punctured trapezoidal quadrature rules for estimating $\int_0^1 f(t) \d t$ with $f(t) = \cos(4\pi t) \log|t| + t$. We have used $M=10$, $n=2$, and $m=9$. Note that $\beta_4 \approx -3 \times 10^{-4}$, so the weight corrections corresponding to $t_{6}=0.6$ and $t_{14}=1.4$ are not visually discernible.}
    \label{fig:example_KR_nodes_weights}
\end{figure}

%*************************************************************************
\subsubsection{A simplified quadrature for periodic functions} \label{sec:KR_periodic}
This subsection follows an argument nearly verbatim from \citet{hao2014high}.
We explain how the Kapur-Rokhlin scheme we just presented simplifies when computing
\[
    \int_{-b}^b f(t) \d t
\]
when $f$ is $2b$-periodic and logarithmically singular at the origin.
We may express these assumed properties of $f$ through the form
\[
    f(t) = p(t) \log \left|\sin \frac{\pi t}{2b} \right| + q(t),
\]
for $2b$-periodic functions $p$ and $q$.

We sum two applications of the original Kapur-Rokhlin scheme ---
one for
\[
    I_1 = \int_0^b f(t) \d t
\]
and another for
\[
    I_2 = \int_{-b}^0 f(t) \d t = \int_0^b f(-t) \d t.
\]
% We next write down $k$th order quadrature schemes for $I_1$ and $I_2$
% and sum them together.
We assume that $p, q \in C^m[-b, b]$ and
generate $2M-1$ equispaced quadrature nodes with spacing $h=b/M$, 
defined by $t_i = ih$ for $i=\pm 1, \dots, \pm(M-1), M$.
The corrected scheme for $I_1$ is
\[
    I_1 = h \left[\sum_{i=1}^{M-1} f_i + \frac 12 f_M \right]
        + h \sum_{l=1}^{(m-1)/2} \beta_l (f_{M-l} - f_{M+l})
        + h \sum_{1 \le |j| \le n} \gamma_j f_j
        + O(h^n),
\]
and the scheme for $I_2$ is
\[
    I_2 = h \left[\sum_{i=1}^{M-1} f_{-i} + \frac 12 f_{-M} \right]
        + h \sum_{l=1}^{(m-1)/2} \beta_l (f_{-M+l} - f_{-M-l})
        + h \sum_{1 \le |j| \le n} \gamma_j f_{-j}
        + O(h^n).
\]
By periodicity, we may identify $f_i$ with $f_{i+2M}$ for all $i$.
It follows that an $n$th order quadrature for $I_1 + I_2$ is
\begin{align*}
    I_1 + I_2 &= h \left[ \sum_{1 \le |i| \le M-1} f_i + f_M \right] 
        + h \sum_{1 \le |j| \le n} \gamma_j (f_j + f_{-j})
        + O(h^n) \\
    &= \sum_{\substack{j=-M+1 \\ j \ne 0}}^M w_j f_j + O(h^n)
\end{align*}
with
\[
    w_j = 
    \begin{cases}
        h(1+\gamma_j + \gamma_{-j}), \quad & 1 \le |j| \le n \\
        h, \quad &\text{otherwise}.
    \end{cases}
\]

We note that, by periodicity, the endpoint corrections corresponding
to the constants $\beta_l$ exactly cancel.
Moreover, we no longer need to place
additional quadrature nodes beyond $[-b,b]$ because periodicity 
identifies the new quadrature nodes with existing nodes.

Given a table of $\gamma_j$ weights, this quadrature scheme is as easy to implement as the trapezoidal rule, and yields high order convergence, with a quadrature error that is $O(h^n)$. The necessary $\gamma_j$ weights can be found in \citet[table 6]{kapur1997high} for $n=2,6,10$. Figure \ref{fig:example_KR_nodes_weights_periodic} can be compared with figure \ref{fig:example_KR_nodes_weights} to view the simplifications for periodic integrands.

\begin{figure}
    \centering
    \includegraphics[width=1\textwidth]{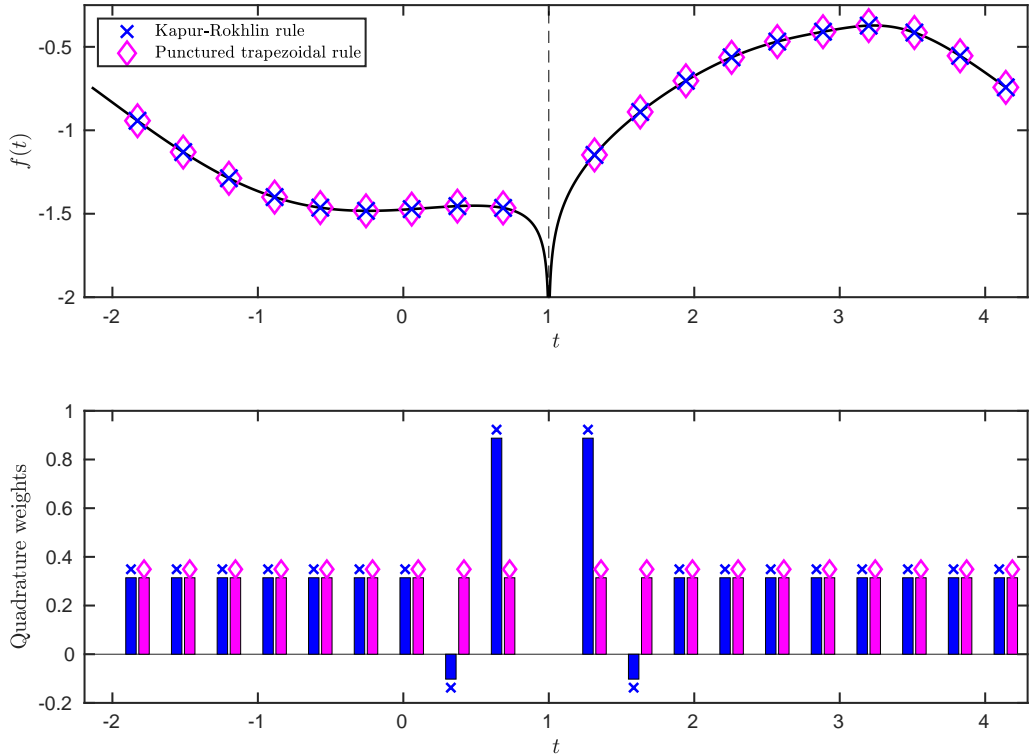}
    \caption{Nodes and weights for the Kapur-Rokhlin and punctured trapezoidal quadrature rules for estimating $\int_{t_0-\pi}^{t_0+\pi} f(t) \d t$. The function $f(t)$ is the $2\pi$-periodic integrand in equation \eqref{eq:doubleID_1D} for one of our later numerical tests with a logarithmic singularity at $t_0=1$. We have used $M=10$ and $n=2$.}
    \label{fig:example_KR_nodes_weights_periodic}
\end{figure}

% On a practical note, convergence is still only theoretically guaranteed
% with this quadrature scheme when $\phi$ and $\psi$ are $C^m$ functions.

%*************************************************************************
\section{The virtual casing principle for toroidally axisymmetric domains}\label{sec:virtual_casing}

\subsection{Formulation of the virtual casing principle}

For axisymmetric confinement devices, the virtual casing principle is most often used to compute the poloidal flux or the poloidal magnetic field due to the toroidal current flowing in the plasma \citep{Shafranov_1972,Zakharov_1973,ZakharovPletzer,Hirshman1986}. A poloidal magnetic field ${\vec B}^{pol}$ at any point $\vec y(\phi, t) \in \Gamma$ can be expressed in terms of its poloidal flux function $\psi(r,z)$ and the parameterization $(\phi,t) \mapsto \vec y(\phi, t)$ by \citep{freidberg2014ideal}
\begin{align}
    \vec B^{pol}(\vec y(\phi,t)) 
    &= \grad \psi(r(t),z(t)) \cross \grad \phi \nonumber \\
    &= \grad \psi(r(t),z(t)) \cross \left(\frac {\vec e_\phi(\phi)}{r(t)} \right) \nonumber \\
    &= -\frac 1{r(t)} \pder\psi z(r(t), z(t)) \vec e_r(\phi) 
        + \frac 1{r(t)} \pder\psi r(r(t), z(t)) \vec e_z.
        \label{eq:B_axisym}
\end{align}
%The virtual casing principle may be understood as follows. First, consider an axisymmetric plasma surrounded by a perfectly conducting toroidal shell. The total poloidal magnetic field $\vec B^{pol}$ can be decomposed into the sum of two constituent fields. One field $\vec B^{pol}_V$ is produced only by the toroidal plasma current flowing through the volume $\Omega$, and the other field $\vec B^{pol}_S$ is produced by toroidal surface currents on $\Gamma$. We may write
%\begin{equation} \label{eq:B_decomp}
%    \vec B^{pol} = \vec B_V^{p} + \vec B_S^{p}.
%\end{equation}

Consider an axisymmetric plasma confined by external coils in equilibrium. The poloidal field $\vec B^{pol}$ at any location is the sum of the poloidal field $\vec B_{ext}^{pol}$ due to the external coils, and of the poloidal field $\vec B_V^{pol}$ due to the plasma current. The field $\vec B_V^{pol}$ is given for all $\vec x \in \R^3$ by the Biot-Savart law:
\begin{equation}
    \vec B_V^{pol}(\vec x)=\frac{\mu_{0}}{4\pi} \iiint_{\Omega}J^{tor}_{V}(\vec y) \vec e_\phi(\vec y) \cross \frac{\vec x - \vec y}{\norm{\vec x - \vec y}^3} \d \vec y
    \label{eq:volume_int}    
\end{equation}
where $\mu_{0}$ is the permeability of free space, and $J^{tor}_{V}$ is the toroidal current density in the plasma. Equation \eqref{eq:volume_int} is a volume integral, which is expensive to evaluate numerically. The virtual casing principle gives a formula for $\vec B^{pol}_V$ that depends only on the full field $\vec B^{pol}$ at the plasma boundary, and only requires the evaluation of a surface integral (i.e. line integral for axisymmetric domains) \citep{Shafranov_1972,Zakharov_1973,hanson2015virtual}. 
In experimental settings, such a representation is useful since only the total magnetic field $\vec B^{pol}$ may be directly measurable \citep{hutchinsonprinciples}. In theoretical settings, the total magnetic field may be computed by solving the Grad-Shafranov equation \citep{shafranov1958magnetohydrodynamical,grad1958proceedings} numerically with a fixed-boundary solver.
Specifically, the virtual casing principle states that if $\Gamma$ is the flux surface bounding the plasma, then $\vec B_V^{pol}$ can be written in terms of a field generated by the toroidal surface current $\vec J^{tor}_S$ such that $\mu_0 \vec J^{tor}_S = -\vec n \cross \vec B^{pol}$, according to \citet{hanson2015virtual}:
\begin{equation} \label{eq:virtual_casing}
    \vec B^{pol}_V(\vec x) = 
    \frac 1{4\pi} \iint_{\Gamma} \left[
    \frac{(\vec n(\vec y) \cross \vec B^{pol}(\vec y)) \cross (\vec x - \vec y)}{\norm{\vec x - \vec y}^3}
    \right] \d \Gamma(\vec y) 
    + 
    \begin{cases}
        \vec B^{pol}(\vec x) \quad &\vec x \in \Omega \\
        \vec B^{pol}(\vec x) / 2 \quad &\vec x \in \Gamma \\
        \vec 0 \quad &\vec x \notin \overline{\Omega}.
    \end{cases}
\end{equation}

For certain applications, one is only interested in the normal component of the poloidal magnetic field \citep{MerkelJCP,HirshmanNEMEC,Merkel_NESCOIL,Landreman_REGCOIL,Zhu_FOCUS}. The previous equation then leads to the more compact form
\begin{equation} \label{eq:virtual_casing_normal}
    \vec n(\vec x) \vec\cdot \vec B^{pol}_V(\vec x) = 
    \frac 1{4\pi} \vec n(\vec x) \vec\cdot \iint_{\Gamma} \left[
    \frac{(\vec n(\vec y) \cross \vec B^{pol}(\vec y)) \cross (\vec x - \vec y)}{\norm{\vec x - \vec y}^3}
    \right] \d \Gamma(\vec y)
\end{equation}
for $\vec x \in \Gamma$. 

The reduction of the integral necessary to compute the field or its normal component from a volume integral to a surface integral is convenient from the point of view of the limited number of values that need to be specified as inputs, and also from the point of view of the computational cost of the integration \citep{Lazerson_2013}. The surface integral in \eqref{eq:virtual_casing} is significantly faster to evaluate than the volume integral \eqref{eq:volume_int}, although certain codes still choose to compute the latter \citep{Hanson_2009,Marx_2017}. 

For axisymmetric situations, one may further take advantage of the axisymmetry of $\vec B^{pol}$ to integrate with respect to $\phi$ analytically, and reduce \eqref{eq:virtual_casing} and \eqref{eq:virtual_casing_normal} to one-dimensional integrals, which are even less computationally expensive. However, one encounters a mathematical and computational difficulty if one does so, because the surface integrals in \eqref{eq:virtual_casing} and \eqref{eq:virtual_casing_normal} are in fact improper integrals, which must be understood in the Cauchy principal value sense. This is what we discuss in the following section.

% The virtual casing principle also provides a representation of the magnetic field $\vec B_S$ produced by toroidal surface currents $\vec J^{tor}_S$. The vector potential
% \begin{equation} \label{eq:VC_vecpot}
%     \vec A_S(\vec x) = \frac {\mu_0}{4 \pi} 
%         \iint_\Gamma \frac{\vec J^{tor}_S(\vec y)}{\norm{\vec x - \vec y}}
%         \d \Gamma(\vec y) %\;\;,
%     % = \mu_0 \mathcal S[\vec J_S](\vec x).
% \end{equation}
% satisfies $\vec B_S = \vec \nabla \cross \vec A_S$, and it is a single-layer potential, which is a widely studied object.

%*************************************************************************

\subsection{Numerical evaluation of the normal component of the virtual-casing magnetic field in axisymmetric systems}

%*************************************************************************
\subsubsection{Circumventing integrals in the principal value sense}
Most applications in magnetic confinement fusion rely on version \eqref{eq:virtual_casing_normal} of the virtual casing principle, in which one wants to compute the normal component of the poloidal magnetic field $\vec B_V^{pol}$ at a boundary point $\vec x\in\Gamma$. One could in principle calculate this normal component by first computing {\em all} the components of $\vec B_V^{pol}$ by the virtual casing principle \eqref{eq:virtual_casing}, and then computing $\vec n \vec\cdot \vec B_V^{pol}$ by a straightforward inner product. In other words, one first evaluates the double integral in \eqref{eq:virtual_casing_normal}, and then takes the dot product with the normal vector $\vec n$ to the surface $\Gamma$ at the point $\vec x$ of interest. This method has the advantage that it produces all components of the poloidal magnetic field $\vec B_V^{pol}$ as intermediary results. However, it also has the disadvantage that one must use a careful principal value integration procedure to interpret the virtual casing principle for $\vec x \in \Gamma$, as discussed in theorem \ref{th:need_pv} and its proof in the appendix. The high order singularity cancellation quadrature scheme we recently proposed for singular integrals on general non-axisymmetric surfaces \citep{malhotra2019efficient} automatically yields the appropriate principal value of the integral. However, it does so thanks to the intrinsic two-dimensional nature of the integral. It does not reduce to a simple and efficient one-dimensional quadrature scheme for the principal value of the integral, as is needed for axisymmetric applications. Similarly, we are not aware of a version of the Kapur-Rokhlin quadrature scheme designed to calculate the Cauchy principal value of the virtual-casing integral.

To address this difficulty, we consider an alternative method to compute the normal poloidal field $\vec n \vec\cdot \vec B_V^{pol}$, based on calculating the vector potential $\vec A_S$ produced by the surface current $\vec J^{tor}_S$, and then obtaining $\vec n \vec\cdot \vec B_V^{pol}$ as the tangential derivative of the poloidal flux $\psi_S$ caused by the surface currents, which is easily expressed in terms of $\vec A_S$. This method is also used, in a slightly different form, by the stellarator optimization code ROSE \citep{Drevlak_2018}, but ROSE does not combine it with a high order quadrature scheme. The vector potential is defined by
\begin{equation}
    \vec A_S(\vec x) = \frac {\mu_0}{4 \pi} 
        \iint_\Gamma \frac{\vec J^{tor}_S(\vec y)}{\norm{\vec x - \vec y}}
        \d \Gamma(\vec y).%\;\;,
    % = \mu_0 \mathcal S[\vec J_S](\vec x).
    \label{eq:vecpot_original}
\end{equation}
and we recognize this expression as a vector of component-wise single layer potentials for Laplace’s equation in three dimensions, as introduced in Section \ref{sec:layer_pot}.

In contrast to the first approach, our method does not produce all components of $\vec B^{pol}_V$. On the other hand, it only requires weakly singular integrals, since it only requires evaluations of single-layer potentials in \eqref{eq:vecpot_original}, as we will prove in Section \ref{sec:analytic}. In the next section, we describe in detail this alternative method for computing the normal magnetic field.

%*************************************************************************
\subsubsection{Normal component of the magnetic field as a tangential derivative}

As in the derivation of the virtual-casing principle \citep{Shafranov_1972,Zakharov_1973,hanson2015virtual}, one can interpret the plasma boundary $\Gamma$ of the axisymmetric plasma as a perfectly conducting shell which contributes to confining the plasma via the poloidal magnetic field $\vec B^{pol}_S$ generated by the toroidal surface current density $\vec J^{tor}_S$ flowing in $\Gamma$. 
At equilibrium, the poloidal surface current field $\vec B^{pol}_S$ matches the poloidal field $\vec B_{ext}^{pol}$ from the external confining coils.
The decomposition $\vec B^{pol} = \vec B^{pol}_V + \vec B^{pol}_S$ then immediately yields $\vec n \vec\cdot \vec B^{pol}_V = -\vec n \vec\cdot \vec B^{pol}_S$.
Now, recall that we may represent $\vec B^{pol}_S$ in axisymmetry via \eqref{eq:B_axisym} to express its normal component at $(r, \phi, z)$ as
\[
    \vec n \vec\cdot \vec B^{pol}_S
    = \vec n \vec\cdot (\grad \psi_S \cross \grad \phi)
    = -\frac 1r (\vec n \cross \vec e_\phi(\phi)) \vec\cdot \grad \psi_S
\]
by the circular shift identity of the vector triple product. Let $\vec x$ correspond to the parameter pair $(\phi_0, t_0)$ with $\vec x = (R, \phi_0, Z) = (r(t_0), \phi_0, z(t_0))$. We define the tangent vector $\vec t$ at a point $\vec x \in \Gamma$ as 
% We write $\vec x = (R, \phi_0, Z) = (r(t_0), \phi_0, z(t_0))$.
\begin{align*}
    \vec t(\vec x) 
    &= \vec n(\vec x) \cross \vec e_\phi(\phi_0) \\
    &= \frac{(z'(t_0) \vec e_r(\phi_0) - r'(t_0) \vec e_z ) \cross \vec e_\phi(\phi_0)}
        {\sqrt{r'(t_0)^2 + z'(t_0)^2}} \\
    &= \frac{ r'(t_0) \vec e_r(\phi_0) + z'(t_0) \vec e_z }
        {\sqrt{r'(t_0)^2 + z'(t_0)^2}}.
\end{align*}
It follows that
\begin{align*}
    \vec n(\vec x) \vec\cdot \vec B^{pol}_V(\vec x)
    &= \frac 1{r(t_0)} \vec t(\vec x) \vec\cdot \grad \psi_S(r(t_0), z(t_0)) \\
    &= \frac 1{r(t_0)\sqrt{r'(t_0)^2 + z'(t_0)^2}}
        \left( \pder{\psi_S} r r'(t_0) + \pder{\psi_S} z z'(t_0) \right) \\
    &= \frac 1{\mathcal J(t_0)}
        \frac{\d \psi_S(r(t_0), z(t_0))}{\d t_0}.
\end{align*}
where $\mathcal J(t_0) = r(t_0)\sqrt{r'(t_0)^2 + z'(t_0)^2}$ is the Jacobian of the parameterization introduced in Section \ref{sec:torsurf_desc}. 

Finally, the poloidal flux function and the vector potential are related at $\vec x$ by the relation \citep{jardin2010computational,freidberg2014ideal}
\begin{equation} \label{eq:flux_def}
    \psi_S(R, Z) = R \vec e_\phi(\phi_0) \vec\cdot \vec A_S(\vec x).
\end{equation}
Given only point evaluations of $\vec A_S$ at equispaced parameters $\{t_i\}$, one can compute $\d \psi_S / \d t$ with high order accuracy at each $t_i$ by Fourier differentiation. This leads to our high order accurate approach to the virtual casing principle in axisymmetric geometries. One computes $\vec n \vec\cdot \vec B^{pol}_V$ on $\Gamma$ by evaluating a weakly singular integral with a high order accurate quadrature rule, and then Fourier differentiating. 
We show in section \ref{sec:KR_vecpot} that Kapur-Rokhlin quadrature provides a simple way to obtain high order accuracy for the weakly singular integral. Before we do so, we prove in the next section that it indeed is a weakly singular integral, whose properties satisfy the requirements to obtain high accuracy with the Kapur-Rokhlin quadrature rule.
% Let $F$ and $F^{-1}$ denote the discrete Fourier transform and its inverse. Let $\vec\Psi_S$ be the data given by $(\vec\Psi_S)_k = \psi_S(r(t_k), z(t_k))$. Then $\partial \psi_S / \partial t$ at $t_k$ is approximated to spectral accuracy by
% \[
%     F^{-1}( \mathrm i k F(\vec\Psi_S))
% \]

%********************************************************************

\section{Analytical Results for Singular Integrals}\label{sec:analytic}
% \todo{Evan: couldn't find any literature talking about this PV.}

%*************************************************************************
\subsection{Vector potential singularity} \label{sec:ana_vecpot}
% \todo{pol and tor here}
 By the virtual casing principle, we have $\mu_0 \vec J^{tor}_S = - \vec n \cross \vec B^{pol}$, so we may equivalently write the vector potential as
\begin{equation}
\label{eq:vecpot_int}
    \vec A_S(\vec x) = -\frac 1{4\pi} \iint_\Gamma 
    \frac{\vec n(\vec y) \cross \vec B^{pol}(\vec y)}
        {\norm{\vec x - \vec y}} 
    \d \Gamma(\vec y),
\end{equation}
which satisfies $\vec B^{pol}_S = \vec \nabla \cross \vec A_S$.  As mentioned earlier, we may view this expression as a vector of component-wise single layer potentials for
Laplace’s equation in three dimensions. Physically, it is clear that the only non-zero component of the vector potential is in the $\vec e_\phi$ direction. Furthermore, \citet[][\S V]{chance1997vacuum} shows that after integrating in the toroidal angle $\phi$, the one-dimensional integrands of both the single layer potential $\mathcal S[\mu_0\vec J^{tor}_S]$ and the double layer potential $\mathcal D[\mu_0\vec J^{tor}_S]$ have integrable, logarithmic singularities. Chance shows this in a modified coordinate system
using the poloidal flux function %$\psi$ 
as a coordinate. Next,
we will verify these results for the virtual casing principle 
in standard cylindrical coordinates.

%*************************************************************************
\subsection{Analytic Reduction to a Line Integral} \label{sec:ana_mainthm}
We shall analytically simplify the integral in \eqref{eq:vecpot_int} by integrating over the toroidal angle. When we do so, we shall introduce the complete elliptic integrals of the first and second kind, which are defined as
\[
K(k^2) = \int_0^{\pi/2} \frac{\d \theta}{\sqrt{1 - k^2 \sin^2\theta}}
\quad \text{and} \quad
E(k^2) = \int_0^{\pi/2} \sqrt{1 - k^2 \sin^2\theta} \d \theta,
\]
respectively.
Our main analytical result gives a formula for the vector potential of the virtual casing
principle %\eqref{eq:virtual_casing} 
as a one-dimensional integral.

\begin{theorem} \label{th:vecpot}
Let $\Gamma$ be a surface of revolution with a generating curve $\gamma$ that satisfies the assumptions of section \ref{sec:torsurf_desc}.
Consider the vector potential
\begin{displaymath}
    \vec A_S(\vec x) = -\frac 1{4\pi} \iint_\Gamma 
    \frac{\vec n(\vec y) \cross \vec B^{pol}(\vec y)}
        {\norm{\vec x - \vec y}} 
    \d \Gamma(\vec y)
    % = \mathcal S[\vec n \cross \vec B](\vec x)
\end{displaymath}
for $\vec x = (R,\phi_0,Z) = (r(t_0), \phi_0, z(t_0))$ in cylindrical coordinates.
Define the quantities
\[
    \alpha = \alpha(t; \vec x) = r(t)^2 + R^2 + (Z - z(t))^2
    \quad \text{and} \quad
    \beta = \beta(t; \vec x) = 2 R r(t)
\]
and set the modulus
\[
    k^2 = k(t; \vec x)^2 = \frac{2\beta}{\alpha + \beta}
    = \frac{4 R r(t)}{(R + r(t))^2 + (Z - z(t))^2}.
\]
Then the vector potential can be expressed as
\begin{equation} \label{eq:vecpot_1d}
    \vec A_S(\vec x) = -\vec e_\phi(\phi_0) \int_0^L \mathcal A(t; \vec x) \d t
    % \Big( \sin\phi \, \vec e_r(\phi) + \cos\phi \, \vec e_\phi(\phi) \Big),
\end{equation}
with the scalar-valued integrand
\[
    \mathcal A(t; \vec x)
    = \frac 1{4\pi} \left( \frac 4{\sqrt{\alpha + \beta}} \right)
    \left[\pder\psi z r'(t) - \pder\psi r z'(t) \right]
    \left( \frac 2{k^2} \Big(K(k^2) - E(k^2) \Big) - K(k^2) \right).
\]
\end{theorem}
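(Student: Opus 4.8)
The plan is a direct computation: substitute the surface parameterization into the double integral, carry out the toroidal ($\phi$) integration analytically, and recognise the result in terms of the complete elliptic integrals $K$ and $E$.

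First I would parameterize $\Gamma$ by $\vec y = \vec y(\phi,t) = r(t)\vec e_r(\phi) + z(t)\vec e_z$, so that $\d\Gamma(\vec y) = \mathcal J(t)\,\d\phi\,\d t$ with $\mathcal J(t) = r(t)\sqrt{r'(t)^2+z'(t)^2}$ and $\vec n(\vec y) = (z'(t)\vec e_r(\phi) - r'(t)\vec e_z)/\sqrt{r'(t)^2+z'(t)^2}$. Inserting the axisymmetric representation \eqref{eq:B_axisym} of $\vec B^{pol}$ and expanding the triple product $\vec n\cross\vec B^{pol}$ with the cyclic identities for $\vec e_r,\vec e_\phi,\vec e_z$, the result points purely along $\vec e_\phi$, and multiplying by $\mathcal J(t)$ cancels the $1/(r\sqrt{r'^2+z'^2})$ prefactor, leaving
\[
    \vec n(\vec y)\cross\vec B^{pol}(\vec y)\,\d\Gamma(\vec y)
    = \left(\pder{\psi}{z}\,r'(t) - \pder{\psi}{r}\,z'(t)\right)\vec e_\phi(\phi)\,\d\phi\,\d t .
\]
This reduces the theorem to evaluating $\int_0^{2\pi}\vec e_\phi(\phi)/\norm{\vec x - \vec y}\,\d\phi$.

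Next I would do the $\phi$-integration. Decomposing $\vec e_\phi(\phi) = \cos(\phi-\phi_0)\,\vec e_\phi(\phi_0) - \sin(\phi-\phi_0)\,\vec e_r(\phi_0)$ and noting that $\norm{\vec x - \vec y}^2 = \alpha - \beta\cos(\phi-\phi_0)$ depends on $\phi$ only through $\cos(\phi-\phi_0)$, the $\vec e_r(\phi_0)$ term vanishes over the full period by oddness --- the physically expected statement that only the $\vec e_\phi(\phi_0)$ component survives. For the remaining scalar integral I would use $\alpha - \beta\cos(\phi-\phi_0) = (\alpha+\beta)\big(1 - k^2\cos^2(\tfrac{\phi-\phi_0}{2})\big)$ with $k^2 = 2\beta/(\alpha+\beta)$, halve the domain by symmetry, and substitute $\sin\theta = \cos(\tfrac{\phi-\phi_0}{2})$, obtaining $\tfrac{2}{\sqrt{\alpha+\beta}}\int_0^{\pi/2}(2\sin^2\theta - 1)/\sqrt{1-k^2\sin^2\theta}\,\d\theta$; writing $\sin^2\theta = \tfrac{1}{k^2}\big(1 - (1-k^2\sin^2\theta)\big)$ then splits this into $K(k^2)$ and $E(k^2)$, giving $\int_0^{2\pi}\vec e_\phi(\phi)/\norm{\vec x-\vec y}\,\d\phi = \tfrac{4}{\sqrt{\alpha+\beta}}\big(\tfrac{2}{k^2}(K(k^2)-E(k^2)) - K(k^2)\big)\vec e_\phi(\phi_0)$.

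Finally I would substitute this back into the definition of $\vec A_S$, collect the factor $-\tfrac{1}{4\pi}$ with $\tfrac{4}{\sqrt{\alpha+\beta}}$ and the elliptic-integral bracket, and read off $\mathcal A(t;\vec x)$ and \eqref{eq:vecpot_1d}. I do not expect a real obstacle: the only delicate points are sign bookkeeping in the vector triple products involving the $\phi$-dependent cylindrical basis and the correct branch and $\theta$-range in the elliptic substitution. The hypotheses of section \ref{sec:torsurf_desc}, in particular $r(t)\ge R_{min}>0$ and $C^1$ regularity of $\gamma$, ensure $\alpha+\beta = (R+r(t))^2 + (Z-z(t))^2 > 0$ and $0 < k^2 \le 1$ with equality only when $\vec x = \vec y$, so every manipulation is legitimate and the resulting integrand is integrable.
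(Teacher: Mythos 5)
Your proposal is correct and follows essentially the same route as the paper: parameterize, reduce the cross product to a pure $\vec e_\phi$ term times the Jacobian, kill the $\vec e_r(\phi_0)$ component by oddness, and reduce the remaining $\cos$-kernel integral to $K$ and $E$ (the paper does this last step via the shift $\phi\mapsto 2\phi+\pi$ and a Gradshteyn--Ryzhik table entry rather than your half-angle substitution, but the computation is the same). One small bookkeeping slip: after halving the domain \emph{and} substituting $\sin\theta=\cos(\tfrac{\phi-\phi_0}{2})$ (which contributes $\d\phi=-2\,\d\theta$), the prefactor should be $4/\sqrt{\alpha+\beta}$ rather than $2/\sqrt{\alpha+\beta}$ in your intermediate display — your final stated result already has the correct factor.
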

\begin{proof}
The unit outward normal vector of $\Gamma$ at $\vec y$ 
is given (by our assumption on the orientation of $\gamma$) by
\begin{align*}
    \vec n(\vec y(\phi,t)) 
    &= \pder{\vec y}\phi\cross\pder{\vec y}t \Bigg/
        \norm{\pder{\vec y}\phi\cross\pder{\vec y}t} \\
    %&= \frac{r(t) \left(z'(t) \vec e_r(\phi) - r'(t) \vec e_z \right)}
    %    {r(t)\sqrt{r'(t)^2 + z'(t)^2}} \\
    &= \frac{z'(t) \vec e_r(\phi) - r'(t) \vec e_z}
        {\sqrt{r'(t)^2 + z'(t)^2}}.
\end{align*}
%Again, we write the normalizing Jacobian of this parameterization as $\mathcal J(t) = r(t)\sqrt{r'(t)^2 + z'(t)^2}$. 
From the representation \eqref{eq:B_axisym} of $\vec B^{pol}$ in axisymmetry,
it follows that
\begin{equation} \label{eq:ncrossB_param}
    (\vec n(\vec y(\phi,t)) \cross \vec B^{pol}(\vec y(\phi,t))) \mathcal J(t) 
    = \left[\pder\psi z r'(t) - \pder\psi r z'(t) \right]
    \vec e_\phi(\phi).
\end{equation}

Next, we consider the difference
\begin{align*}
    \vec x - \vec y(\phi, t)
    &= [R\vec e_r(\phi_0) + Z\vec e_z] - [r(t)\vec e_r(\phi)
        +  z(t)\vec e_z] \\
    &= ( R\cos\phi_0 - r(t)\cos\phi ) \vec e_x 
        + ( R\sin\phi_0 - r(t)\sin\phi ) \vec e_y 
        + (Z- z(t))\vec e_z,
\end{align*}
which we have expressed in both cylindrical and rectangular coordinates.
Recall that the unit vector $\vec e_z$ is identical in both coordinate systems, and the remaining rectangular unit vectors $\vec e_x$ and $\vec e_y$ are related to standard cylindrical unit vectors by
\begin{align*}
    \vec e_r(\phi) &= \cos\phi \, \vec e_x + \sin\phi \, \vec e_y 
    \quad \text{and} \\
    \vec e_\phi(\phi) &= -\sin\phi \, \vec e_x + \cos\phi \, \vec e_y.
\end{align*}
From the representation
in rectangular coordinates, we use the trigonometric identity
$\cos(\phi_0-\phi)=\cos\phi_0\cos\phi + \sin\phi_0\sin\phi$
and immediately obtain
\begin{equation} \label{eq:diffnorm}
    \norm{\vec x - \vec y(\theta, t)}^2
    = R^2 + r(t)^2 + (Z - z(t))^2 - 2 R r(t) \cos(\phi_0-\phi)
    = \alpha - \beta \cos(\phi_0 - \phi).
\end{equation}
We have now shown that the surface integral \eqref{eq:vecpot_int}
for the vector potential is equivalent to 
the following double integral over a rectangle in the parameter domain:
\[
    \vec A_S(\vec x) = -\frac 1{4\pi}
    \int_0^L \left[\pder\psi z r'(t) - \pder\psi r z'(t) \right]
    \int_0^{2\pi}
    \frac{\vec e_\phi(\phi)}{\sqrt{\alpha(t) - \beta(t) \cos(\phi_0 - \phi)}}
    \d \phi \d t.
\]
We may analytically evaluate the inner integral using trigonometric identities and known integral formulae.
% to transform the shifted arguments
% $\phi$ and $\phi_0-\phi$ into more tractable forms.
We use the identities 
\[
    \begin{cases}
        \sin(\phi+\phi_0) = \sin\phi\cos\phi_0 + \cos\phi\sin\phi_0 
        \\
        \cos(\phi+\phi_0) = \cos\phi\cos\phi_0 - \sin\phi\sin\phi_0
    \end{cases}
\]
to compute that
\begin{align*}
    \int_0^{2\pi} \frac{-\sin\phi \, \vec e_x + \cos\phi \, \vec e_y}
        {\sqrt{\alpha - \beta\cos(\phi_0 - \phi)}} \d \phi
    &= \int_{0}^{2\pi} 
        \frac{-\sin(\phi+\phi_0) \, \vec e_x + \cos(\phi+\phi_0) \, \vec e_y}
            {\sqrt{\alpha - \beta\cos\phi}} \d \phi \\
    % &= \int_{-\phi_0}^{2\pi-\phi_0} 
    %     \frac{-(\sin\phi\cos\phi_0 + \cos\phi\sin\phi_0) \, \vec e_x 
    %         + (\cos\phi\cos\phi_0 - \sin\phi\sin\phi_0) \, \vec e_y}
    %         {\sqrt{\alpha - \beta\cos\phi}} \d \phi \\
    % &= (-\sin\phi_0 \, \vec e_x) \int_0^{2\pi} 
    %     \frac{\cos\phi \d \phi}{\sqrt{\alpha - \beta\cos\phi}} \\
    % & \qquad \qquad + (\cos\phi_0 \, \vec e_y) \int_0^{2\pi} 
    %     \frac{\cos\phi \d \phi}{\sqrt{\alpha - \beta\cos\phi}} \\
    &= \vec e_\phi(\phi_0) \int_0^{2\pi} 
        \frac{\cos\phi \, \d \phi}{\sqrt{\alpha - \beta\cos\phi}} 
        - \vec e_r(\phi_0) \int_0^{2\pi} 
        \frac{\sin\phi \, \d \phi}{\sqrt{\alpha - \beta\cos\phi}} \\
    &= \vec e_\phi(\phi_0) \int_0^{2\pi} 
        \frac{\cos\phi \, \d \phi}{\sqrt{\alpha - \beta\cos\phi}}.
\end{align*}
Here, we have used the fact that
\[
    \int_0^{2\pi} \frac{\sin\phi \, \d \phi}{\sqrt{\alpha - \beta \cos\phi}} = 0,
\]
which holds because it is the integral of an odd function over a single period.

Now, the remaining integral can be expressed in terms of the complete elliptic
integrals, through the identity
\begin{align}
    \int_0^{2\pi} \frac{\cos\phi \, \d \phi}{\sqrt{\alpha - \beta\cos\phi}}
    &= 2\int_{-\pi/2}^{\pi/2} \frac{\cos (2\phi+\pi) \, \d \phi}{\sqrt{\alpha - \beta\cos (2\phi+\pi)}} 
        % \tag{$\phi \to (\phi-\pi)/2$} 
        \nonumber
        \\
    &= 2\int_{-\pi/2}^{\pi/2} \frac{-(1 - 2\sin^2\phi) \, \d \phi}{\sqrt{\alpha + \beta(1 - 2\sin^2\phi)}} 
        % \tag{trig id} 
        \nonumber
        \\
    &= \frac{-4}{\sqrt{\alpha+\beta}} \int_0^{\pi/2} \frac{1-2\sin^2\phi}
        {\sqrt{1 - k^2 \sin^2\phi}} \d \phi 
        % \tag{even function}
        \nonumber
        \\
    &= \frac{-4}{\sqrt{\alpha+\beta}} \left( K(k^2) - \frac 2{k^2} (K(k^2) - E(k^2))\right). \label{eq:gr_transform}
\end{align}
We obtain the last equality \eqref{eq:gr_transform} from the definition of $K$ and from a formula of
\citet[2.584-4]{gradshteyn2014table}.
The desired result now immediately follows.

% The desired result now follows from writing the surface integral as
% a double integral over the parameter space rectangle 
% $\{(\phi,t) ~|~ \phi\in [0,2\pi], t\in [0, L]\}$.
\end{proof}

The Kapur-Rokhlin quadrature rule that we analyzed in section \ref{sec:KR_periodic} applies to integrands with a logarithmic singularity. With the result we just obtained, we can verify, by analyzing the behavior of the complete elliptic integrals, that the integrand $\mathcal A$ is indeed logarithmically singular as $t \to t_0$, in agreement with the results from \citet[][\S V]{chance1997vacuum}. Specifically, as the modulus $k$ tends to $1$, the second-kind integral $E$ is continuous and bounded, and the first-kind elliptic integral $K$ is logarithmically singular. The mapping $t \mapsto k(t; \vec x)^2$ is continuous, so $E(k(t;\vec x)^2)$ is continuous and $K(k(t;\vec x)^2)$ is logarithmically singular as $t \to t_0$.
Readers interested in more detail regarding these results are referred to lemma \ref{lem:K_asymptotics} in the appendix.

%*************************************************************************
\section{A Kapur-Rokhlin scheme for the virtual casing principle} \label{sec:KR_vecpot}

%*************************************************************************
% \subsection{A Kapur-Rokhlin scheme for the virtual casing principle} \label{sec:KR_vecpot}
We may compute the vector potential by equation \eqref{eq:vecpot_1d} of theorem \ref{th:vecpot} using the Kapur-Rokhlin quadrature rule for periodic functions 
introduced in section \ref{sec:KR_periodic}.
Let $\vec x \in \Gamma$ be given, corresponding to parameters $(\phi_0,t_0)$. We reiterate that a univariate integral expression for the vector potential is
\[
    \vec A_S(\vec x) = -\vec e_\phi(\phi_0) \int_0^L \mathcal A(t; \vec x) \d t.
\]
The Kapur-Rokhlin quadrature of section \ref{sec:KR_periodic} applies directly because
the integration interval $[0,L]$ is identical, by periodicity, to the symmetric interval $[t_0-L/2, t_0+L/2]$ and 
because the integrand $\mathcal A(t; \vec x)$ is logarithmically singular as $t \to t_0$.

Given $M \in \mathbb N$, we generate $2M-1$ quadrature nodes $t_i = t_0 + ih$ for $i = \pm1, \dots, \pm(M-1),M$ with spacing $h=L/(2M)$. We evaluate $\mathcal A_i = \mathcal A(t_i; \vec x)$, and it follows that
% As long as $\Gamma$ is a smooth surface and the parameterization functions $r(t)$ and $z(t)$ are smooth, the functions $\phi$ and $\psi$ are likewise smooth.
\begin{equation} \label{eq:KR_vecpot}
    \int_0^L \mathcal A(t; \vec x) \d t
    = h \left[ \sum_{1 \le |i| \le M-1} \mathcal A_i + \mathcal A_M \right] 
        + h \sum_{1 \le |j| \le n} \gamma_j (\mathcal A_j + \mathcal A_{-j})
        + O(h^n).
\end{equation}

Since periodicity has removed our considerations about expanding the integration domain, the parameter $n$ can be taken as large as we like (as long as $M \ge n$ to define the quadrature rule). However, in practice this Kapur-Rokhlin scheme is known to be unstable for $n$ larger than about 10 because the weights $\gamma_j$ are sign-indefinite and grow large in magnitude.

%*************************************************************************
\section{Numerical results}\label{sec:numerics}
In this section, we illustrate the Kapur-Rokhlin quadrature schemes from sections \ref{sec:KR_periodic} and \ref{sec:KR_vecpot} in two different calculations. Throughout, we will test the schemes for an axisymmetric plasma boundary given by the level set $\psi=0$ of the poloidal flux function given by
% \todo{clarify psi?}
\[
    \psi(r,z) = \frac{\kappa F_B}{2 R_0^3 q_0}
    \left[\frac 14 (r^2 - R_0^2)^2 + \frac 1{\kappa^2} r^2 z^2 - a^2 R_0^2 \right].
\]
This flux function is a solution to the Grad-Shafranov equation with the Solov'ev profiles $\mu_{0}p(\psi)=-[F_{B}(\kappa + 1/\kappa)/( R_0^{3}q_0)]\psi$ and $F(\psi)=F_B$, where $p(\psi)$ is the plasma pressure profile, and $F(\psi)=rB_{\phi}$, with $B_\phi$ the toroidal magnetic field \citep{lutjens1996chease,lee2015ecom}. The parameters $R_0$ and $q_0$ may be interpreted as the major radius and safety factor at the magnetic axis, and $\kappa$ and $a$ as the elongation and minor radius of the plasma boundary. All numerical tests in this manuscript use the fusion relevant values $F_B = R_0 = q_0 = 1$ and $\kappa = 1.7$ and $a = 1/3$. The level set $\psi = 0$ may be parameterized by the functions \citep{lutjens1996chease,lee2015ecom}
\begin{align*}
    (r(t))^2 &= R_0^2 + 2aR_0 \cos t 
    \quad \text{and} \\
    z(t) &= \kappa a \frac{R_0}{r(t)}\sin t
\end{align*}
for $t \in [0, 2\pi]$.

\subsection{Double layer identity}

For our first numerical verification, we consider an identity associated with harmonic functions.
Consider the Green's function $G(\vec x, \vec y) = (4 \pi \norm{\vec x - \vec y})^{-1}$
for Laplace's equation in three dimensions. It satisfies the double layer jump condition \citep{malhotra2019efficient}
\[
    \iint_\Gamma \pder{G(\vec x, \vec y)}{\vec n(\vec y)} \d \Gamma(\vec y)
    = \frac 1{4\pi} \iint_\Gamma 
        \frac{\vec n(\vec y) \cdot (\vec x - \vec y)} {\norm{\vec x - \vec y}^3}
    \d \Gamma(\vec y)
    = - \frac 12
\]
for $\vec x \in \Gamma$. It follows that
\[
    1 + \frac 1{2\pi} \iint_\Gamma 
    \frac{\vec n(\vec y) \cdot (\vec x - \vec y)} {\norm{\vec x - \vec y}^3}
    \d \Gamma(\vec y) = 0,
\]
again for $\vec x \in \Gamma$.
Following identical methodology to the proof of theorem \ref{th:vecpot}, we integrate out the toroidal angle analytically to obtain the one-dimensional integral identity
\begin{equation}
    \label{eq:doubleID_1D}
    1 + \frac 1{2\pi} \int_0^L \frac{4r}{(\alpha+\beta)^{3/2}} 
    \left\{ -\frac{2 z' R}{k^2} K(k^2) + \left(
        \frac{2 z' R}{k^2} + \frac{z'(R-r) - r'(Z-z)}{1-k^2}
        \right) E(k^2)
    \right\} \d t
    = 0.
\end{equation}
In the above expression, we have suppressed the dependence of $\{r,z,r',z',\alpha,\beta,k^2\}$ on $t$ for clarity. As usual, we have also used the identification $\vec x = (R, \phi_0, Z) = (r(t_0), \phi_0, z(t_0))$.
The integrand is logarithmically singular because of the presence of the singular elliptic integral $K(k^2)$, and because the other seemingly singular coefficient is actually bounded
when $r''(t_0)$ and $z''(t_0)$ exist (which is the case in our example), with
\[
    \lim_{t \to t_0} \frac{z'(t)(R-r(t)) - r'(t)(Z-z(t))}{1-k(t)^2}
    = 2 R^2 \left( \frac{r''(t_0) z'(t_0) - r'(t_0) z''(t_0)}{r'(t_0)^2 + z'(t_0)^2} \right).
\]
We conclude that the integral in \eqref{eq:doubleID_1D} is of the Kapur-Rokhlin form from section \ref{sec:KR_periodic}. 

Figure \ref{fig:double_layer} illustrates the performance of the 10th order periodic Kapur-Rokhlin quadrature scheme for verifying the identity \eqref{eq:doubleID_1D}. We compare this method with the alternating trapezoidal rule, which uses common quadrature weight $h=\pi/M$ and quadrature nodes $\tilde t_i = t_0 + (i - \frac 12) h$ for $i=0, \pm1, \dots, \pm(M-1), M$ that straddle the singularity at $t_0$.
We find that the Kapur-Rokhlin scheme achieves the theoretical 10th order accuracy, and that we obtain the full accuracy of the quadrature scheme using about 175 quadrature nodes. Because the behavior of the integrand is not generally symmetric about $t=t_0$, the alternating trapezoidal rule performs poorly. In order to avoid unfairly representing the performance of singularity subtraction schemes currently used in plasma physics applications, we have not attempted to code our own versions of them to compare their accuracy and their convergence rate in this figure. However, we explain in \citet{malhotra2019efficient} why all these singularity subtraction schemes are expected have low order convergence, with second order convergence expected for most of them.

\begin{figure}
    \centering
    \includegraphics[width=0.8\textwidth]{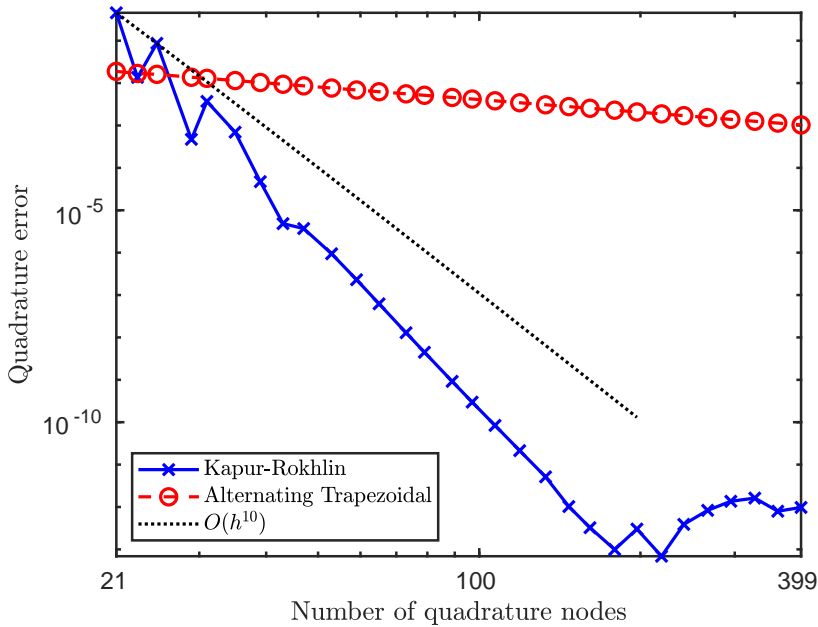}
    \caption{Error convergence for the double layer identity \eqref{eq:doubleID_1D} at $t_0 = 1$.}
    \label{fig:double_layer}
\end{figure}

\subsection{Virtual casing principle}
For our second test, we evaluate the accuracy of our method for calculating the normal component of the magnetic field due to the plasma current on the plasma boundary, i.e. $\vec n \vec\cdot \vec B^{pol}_V$ on $\Gamma$. The plasma equilibrium we consider is the Solov'ev equilibrium described above \citep{lutjens1996chease,lee2015ecom}. Since we do not know the analytic solution to this problem, we compare our implementation with an existing high order accurate implementation from \citet{malhotra2019efficient}. The method used in that implementation is different from the approach presented here in several ways, making it appropriate for our verification. Specifically, the code presented in \citep{malhotra2019efficient} views the plasma equilibrium as a fully three-dimensional equilibrium, and does not assume axisymmetry. Furthermore, \citet{malhotra2019efficient} obtain $\vec n \vec\cdot \vec B^{pol}_V$ on $\Gamma$ directly via a direct evaluation of \eqref{eq:virtual_casing}, as opposed to first computing the vector potential. Finally, the Cauchy principal value of \eqref{eq:virtual_casing} is numerically evaluated via a partition of unity scheme to handle the singulary of the integrand. We take the result from a high resolution calculation of \citet{malhotra2019efficient} for this problem as the ground truth, against which we test the accuracy of our approach as a function of the number of quadrature points.

Figure \ref{fig:ndotB_code} illustrates our results when using the 10th order Kapur-Rokhlin scheme for this problem. Again, we find that the Kapur-Rokhlin scheme achieves the theoretical convergence rate, and that the accuracy has converged by about 400 quadrature nodes. In both this example and the double layer identity example, we observe that the Kapur-Rokhlin scheme errors do not converge all the way to machine precision. This is one known drawback of the Kapur-Rokhlin methods, driven partially by the instabilities caused by the correction weights. Nonetheless, in contexts where full machine precision is not necessary, this scheme provides high accuracy and is easy to implement by reading the correction weights $\gamma_j$ from a table of pre-computed values.

\begin{figure}
    \centering
    \includegraphics[width=0.8\textwidth]{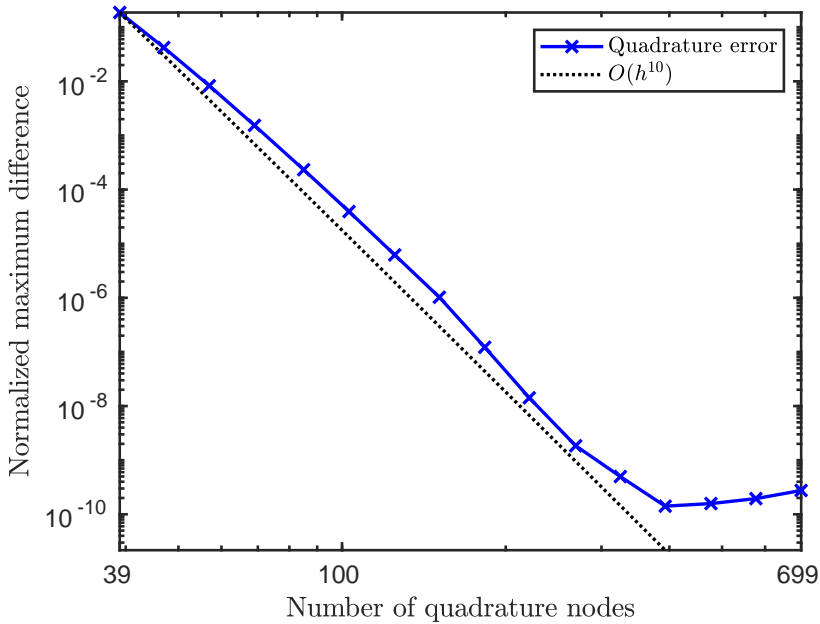}
    \caption{Comparison with existing code to compute the normal component of the magnetic field.}
    \label{fig:ndotB_code}
\end{figure}

%*************************************************************************
\section{Conclusion}\label{sec:conclusion}
For axisymmetric confinement fusion systems, a direct implementation of the Kapur-Rokhlin quadrature scheme yields high accuracy for the evaluation of the layer potentials commonly encountered in magnetostatic and magnetohydrodynamic calculations. Using the table of quadradure weights given in the original article by Kapur and Rokhlin \citep{kapur1997high}, the scheme is as easy to implement as the trapezoidal rule, and, unlike commonly used methods, does not require any manipulation of the singular integrands. We demonstrated how to implement it for the evaluation of a double-layer potential and for the virtual casing principle, obtaining 10th order convergence in both cases. 

At the moment, our approach is restricted to singular integrals along smooth boundaries, and cannot be applied to magnetic surfaces with one or several X-points. The development of an efficient and accurate quadrature scheme for this importance case is the subject of ongoing work, with results to be reported at a later date.

\section*{Acknowledgements}
The authors would like to thank Leslie Greengard and Mike O'Neil for insightful discussions. Evan Toler was supported by the National Science Foundation Graduate Research Fellowship under grant no. 1839302. A. J. Cerfon was supported by the United States Department of Energy, Office of Fusion Energy Sciences, under grant no. DE-FG02-86ER53223

\section*{Declaration of Interests}
The authors report no conflict of interest.

\bibliographystyle{jpp}
\bibliography{biblio}

\appendix

%*************************************************************************
\section{}\label{appA}
In this appendix, we prove that the virtual casing principle \eqref{eq:virtual_casing} is undefined by standard integration when the evaluation point lies on the boundary.
In order to prove this, we require the following two lemmas
concerning the asymptotic behavior of the complete elliptic
integral $K(k(t)^2)$ for values $t$ close to $t_0$. Throughout, we will identify the evaluation point $\vec x \in \Gamma$ with the parameters $(\phi_0, t_0)$, so that $\vec x = (r(t_0), \phi_0, z(t_0)) = (R, \phi_0, Z)$.

\begin{lemma} \label{lem:k2_expansion}
Assume $r(t)$ and $z(t)$ are $L$-periodic $C^1[0,L]$ parameterization functions.
% and $(R,Z) = (r(t_0),z(t_0))$. 
Define the modulus by
\[
    k(t)^2 = \frac{4 R r(t)}{(R + r(t))^2 + (Z - z(t))^2}.
\]
Let $|t-t_0| > 0$ be sufficiently small. Then 
\[
    k(t)^2 = \left(1 + 
    (t-t_0)^2 \left(\frac{r'(s_r)^2 + z'(s_z)^2}
        {4R(R+ (t-t_0) r'(s_r))}
    \right)\right)^{-1}
\]
and
\[
    1 - k(t)^2 = \left(1 + 
    (t-t_0)^{-2} \left(\frac{4R(R+ (t-t_0) r'(s_r))}
        {r'(s_r)^2 + z'(s_z)^2}
    \right)\right)^{-1}
\]
for some $s_r,s_z$ between $t_0$ and $t$.
\end{lemma}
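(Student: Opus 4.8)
The plan is to reduce everything to an elementary algebraic rewriting of the denominator defining $k(t)^2$ followed by a single application of the mean value theorem. The key observation is that $(R+r(t))^2 = (R-r(t))^2 + 4Rr(t)$, so that
\[
    (R+r(t))^2 + (Z-z(t))^2 = 4Rr(t) + (R - r(t))^2 + (Z - z(t))^2,
\]
and hence
\[
    \frac 1{k(t)^2} = 1 + \frac{(R - r(t))^2 + (Z - z(t))^2}{4 R r(t)}.
\]
This isolates the two quantities that need to be expanded around $t_0$, namely the squared differences $R-r(t)=r(t_0)-r(t)$ and $Z-z(t)=z(t_0)-z(t)$, together with the factor $r(t)$ in the denominator.

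Next I would apply the mean value theorem separately to the $C^1$ functions $r$ and $z$ on the interval with endpoints $t_0$ and $t$: there exist $s_r,s_z$ strictly between $t_0$ and $t$ with $r(t)-R=(t-t_0)r'(s_r)$ and $z(t)-Z=(t-t_0)z'(s_z)$. Substituting, and using $r(t)=R+(t-t_0)r'(s_r)$ in the denominator, gives
\[
    \frac 1{k(t)^2} = 1 + (t-t_0)^2\left(\frac{r'(s_r)^2 + z'(s_z)^2}{4R\,(R + (t-t_0)r'(s_r))}\right),
\]
and taking reciprocals yields the first identity. For the second, I would write $1-k^2=(k^{-2}-1)/k^{-2}$, so that $(1-k(t)^2)^{-1}=1+(k(t)^{-2}-1)^{-1}$, then insert the bracketed expression just obtained for $k(t)^{-2}-1$ and invert once more to obtain
\[
    1 - k(t)^2 = \left(1 + (t-t_0)^{-2}\left(\frac{4R\,(R + (t-t_0)r'(s_r))}{r'(s_r)^2 + z'(s_z)^2}\right)\right)^{-1}.
\]

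The only places the hypothesis ``$|t-t_0|>0$ sufficiently small'' is actually used are in guaranteeing that the denominators make sense: one needs $t\neq t_0$ to divide by $(t-t_0)^2$, one needs $R+(t-t_0)r'(s_r)=r(t)>0$, which is automatic from the standing assumption $r(t)\ge R_{min}>0$, and for the second identity one needs $r'(s_r)^2+z'(s_z)^2\neq 0$. This last point is the only mild subtlety: as $t\to t_0$ both $s_r$ and $s_z$ tend to $t_0$, so by continuity of $r'$ and $z'$ together with regularity of the generating curve at $t_0$ (i.e.\ $r'(t_0)^2+z'(t_0)^2>0$), the quantity $r'(s_r)^2+z'(s_z)^2$ is bounded away from zero once $|t-t_0|$ is small enough. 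I expect this bookkeeping, rather than any genuine analytic difficulty, to be the only thing requiring care; the remainder is the completion of the square and a direct application of the mean value theorem.
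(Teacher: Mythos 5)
Your proposal is correct and follows essentially the same route as the paper: the mean value theorem applied to $r$ and $z$ on the interval between $t_0$ and $t$, combined with elementary algebra on the denominator of $k(t)^2$ (you merely complete the square before invoking the mean value theorem, whereas the paper substitutes first and simplifies after). Your explicit verification of the $1-k(t)^2$ identity and the remark on where ``$|t-t_0|$ sufficiently small'' is needed are fine elaborations of steps the paper leaves to the reader.
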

\begin{proof}
We use the Taylor expansions
\begin{align*}
    r(t) &= r(t_0) + (t-t_0) r'(s_r) = R + (t-t_0) r'(s_r) \\
    z(t) &= z(t_0) + (t-t_0) z'(s_z) = Z + (t-t_0) z'(s_z)
\end{align*}
for some $s_r, s_z$ between $t_0$ and $t$.
Inserting this into the modulus, we obtain
\begin{align*}
    k(t)^2 
    &= \frac{4 R (R + (t-t_0) r'(s_r))}
        {4R(R + (t-t_0) r'(s_r)) 
            + (t-t_0)^2 r'(s_r)^2 + (t-t_0)^2 z'(s_z)^2} \\
    &= \left( \frac{4R(R + (t-t_0) r'(s_r)) 
            + (t-t_0)^2 ( r'(s_r)^2 + z'(s_z)^2 )}
        {4 R (R + (t-t_0) r'(s_r))} \right)^{-1}
\end{align*}
Simplifying the fraction yields the desired expression for $k(t)^2$.
It is quick to verify the expression for $1-k(t)^2$ accordingly.
% from the identity
% \[
% 1 - \frac a{a+b} = \frac b{a+b}.
% \]
\end{proof}

\begin{lemma} \label{lem:K_asymptotics}
% Assume $r,z \in C^1[0,L]$ and $(R,Z) = (r(t_0),z(t_0))$. 
Assuming the same conditions as lemma \ref{lem:k2_expansion},
% Define the modulus
% \[
% k(t)^2 := \frac{4 R r(t)}{(R + r(t))^2 + (Z - z(t))^2}.
% \]
% Then 
the complete elliptic integral of the first kind obeys the asymptotic behavior
\[
    K(k(t)^2) = - \log |t-t_0| + O(1)
    \qquad \text{as }
    t \to t_0.
\]
\end{lemma}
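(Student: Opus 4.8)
The plan is to feed the explicit formula for $1-k(t)^2$ from lemma \ref{lem:k2_expansion} into the classical logarithmic expansion of the complete elliptic integral of the first kind near unit modulus,
\[
    K(m) = -\tfrac12 \log(1-m) + 2\log 2 + O\big((1-m)\log(1-m)\big)
    \qquad \text{as } m \to 1^-,
\]
a standard fact about $K$ (see, e.g., \citet{gradshteyn2014table}). It then suffices to show that $k(t)^2 \to 1$ and that $-\tfrac12\log(1-k(t)^2) = -\log|t-t_0| + O(1)$ as $t\to t_0$, since the error and constant terms above are both $O(1)$.

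First I would rewrite the lemma \ref{lem:k2_expansion} identity as
\[
    1 - k(t)^2 = \frac{(t-t_0)^2}{(t-t_0)^2 + C(t)},
    \qquad
    C(t) := \frac{4R\big(R + (t-t_0)\,r'(s_r)\big)}{r'(s_r)^2 + z'(s_z)^2},
\]
where $s_r,s_z$ lie between $t_0$ and $t$. Since $s_r,s_z \to t_0$ as $t\to t_0$, the continuity of $r'$ and $z'$ (our $C^1$ hypothesis on $\gamma$), together with $r(t_0)=R\ge R_{min}>0$ and the regularity of the generating curve (i.e.\ $r'(t_0)^2+z'(t_0)^2\ne0$, equivalently $\mathcal J(t_0)\ne0$), give
\[
    C(t) \longrightarrow C_0 := \frac{4R^2}{r'(t_0)^2 + z'(t_0)^2} \in (0,\infty).
\]
Hence $1-k(t)^2 = (t-t_0)^2\big(C_0^{-1}+o(1)\big) \to 0$, so in particular $k(t)^2\to1$ and the expansion of $K$ above is in force.

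Next I would take logarithms: because $C_0^{-1}+o(1)$ stays bounded away from $0$ and $\infty$ in a punctured neighbourhood of $t_0$,
\[
    -\tfrac12\log\big(1-k(t)^2\big)
    = -\log|t-t_0| - \tfrac12\log\!\big(C_0^{-1}+o(1)\big)
    = -\log|t-t_0| + O(1).
\]
The error term in the $K$-expansion contributes $O\big((1-k(t)^2)\log(1-k(t)^2)\big) = O\big((t-t_0)^2\log|t-t_0|\big) = o(1)$, and $2\log2$ is a constant; absorbing both into $O(1)$ and adding gives $K(k(t)^2) = -\log|t-t_0| + O(1)$.

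The main obstacle is not computational but a matter of hypotheses: one must know that $r'(t_0)^2+z'(t_0)^2>0$ so that $C_0$ is finite and nonzero, i.e.\ that the generating curve is regular at $t_0$ and not merely $C^1$. This is implicit in treating $\mathcal J(t)$ as a genuine (nonvanishing) Jacobian, and I would flag it explicitly. A secondary, minor point is that $s_r$ and $s_z$ are distinct and depend on $t$; this is harmless since only their common limit $t_0$ and the continuity of $r',z'$ enter the argument.
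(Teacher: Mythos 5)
Your proposal is correct and follows essentially the same route as the paper: both feed the explicit formula for $1-k(t)^2$ from lemma \ref{lem:k2_expansion} into the standard logarithmic expansion of $K$ near unit modulus (the paper cites it in the form $K(k^2)=\log(4/\sqrt{1-k^2})+o(1)$, which is identical to your version) and then extract $-\log|t-t_0|$ from the logarithm. Your explicit remark that $r'(t_0)^2+z'(t_0)^2>0$ is needed is a fair point of care, though the paper already assumes this regularity of the generating curve elsewhere.
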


\begin{proof}
We first observe that $k(t)^2 \to 1^-$ as $t \to t_0$.
In this regime, \citet[8.113-3]{gradshteyn2014table}
gives the asymptotic expansion
\[
    K(k^2) = \log\left( \frac 4{\sqrt{1-k^2}} \right) 
        + o(1)
    \quad \text{as } k^2 \to 1^-.
\]
From lemma \ref{lem:k2_expansion}, we also have the 
explicit representation
of the dominant term
\[
    \log\left( \frac 4{\sqrt{1-k^2}} \right) 
    = \log 4 + \frac 12 \log
    \left( 
    1 +
    |t-t_0|^{-2} 
    \left(
    \frac{4R( R + (t-t_0) r'(s_r) )}
        {r'(s_r)^2 + z'(s_z)^2}
    \right)
    \right).
\]
Finally, we use the identity 
$\log(1+u) = \log u + \log(u^{-1} + 1) = \log u + o(1)$ as $u \to \infty$.
We conclude that
\begin{align*}
    K(k(t)^2) 
    &= \log 4 + \frac 12
    \log \left(|t-t_0|^{-2} \left(
    \frac{4R( R + (t-t_0) r'(s_r) )}
    {r'(s_r)^2 + z'(s_z)^2} 
    \right)
    \right) + o(1) \\
    &= \log 4 - \log |t-t_0|
    + \frac 12 \log\left(
    \frac{4R( R + (t-t_0) r'(s_r) )}
    {r'(s_r)^2 + z'(s_z)^2} 
    \right) + o(1)
\end{align*}
as $t \to t_0$. Only the second term is singular as $t \to t_0$; the others are bounded, and this completes the proof.
\end{proof}

We now have the necessary tools to prove that the integrand
obtained by the virtual casing principle is not absolutely
integrable on the parameter domain.

\begin{theorem} \label{th:need_pv}
Let $\Gamma$ be a smooth surface of revolution, and let $\vec x \in \Gamma$. 
% Assume that the cross-section of $\Gamma$ admits an $L$-periodic
% $C^\infty$ parameterization $(r(t), z(t))$. 
Assume that there exists a constant $R_\text{min} > 0$ such that 
$r(t) \ge R_\text{min}$ for all $t \in [0,L]$.
Then
\[
    \iint_\Gamma \norm{
    \frac{(\vec n(\vec y) \cross \vec B^{pol}(\vec y)) 
        \cross (\vec x - \vec y)}
        {\norm{\vec x - \vec y}^3}} 
    \d \Gamma(\vec y) = \infty.
\]
\end{theorem}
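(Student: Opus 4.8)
The plan is to reduce the surface integral to a one-dimensional integral over the generating curve parameter $t$, exactly as in the proof of theorem \ref{th:vecpot}, and then show that the resulting integrand fails to be absolutely integrable near $t = t_0$ because it behaves like $|t-t_0|^{-1}$. First I would parameterize the integral: writing $\vec y = \vec y(\phi, t)$, using $\d\Gamma(\vec y) = \mathcal J(t)\,\d\phi\,\d t$, and recalling from \eqref{eq:ncrossB_param} that $(\vec n \cross \vec B^{pol})\mathcal J(t) = [\psi_z r'(t) - \psi_r z'(t)]\,\vec e_\phi(\phi)$, the numerator becomes $[\psi_z r'(t) - \psi_r z'(t)]\,\vec e_\phi(\phi) \cross (\vec x - \vec y)$. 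Since the norm of a cross product with a unit vector is at least the component of $\vec x - \vec y$ perpendicular to $\vec e_\phi(\phi)$, and in fact $\norm{\vec e_\phi(\phi) \cross (\vec x - \vec y)} \ge |(\vec x - \vec y)\vec\cdot\vec e_z| = |Z - z(t)|$ (more carefully, one can keep the full expression), I would get a lower bound of the form
\[
    \norm{\frac{(\vec n \cross \vec B^{pol})\cross(\vec x - \vec y)}{\norm{\vec x - \vec y}^3}}\mathcal J(t)
    \ge \frac{C(t)}{\norm{\vec x - \vec y}^2}
\]
for a function $C(t)$ that is continuous and strictly positive in a neighborhood of $t_0$ (assuming $\psi_z r' - \psi_r z'$ does not vanish at $t_0$, which holds for a genuine confinement equilibrium; if one wants full generality one keeps a sharper bound using the component of $\vec x - \vec y$ transverse to both $\vec e_\phi$ and tangent to $\gamma$).

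Next I would carry out the $\phi$-integration, or rather lower-bound it. Using \eqref{eq:diffnorm}, $\norm{\vec x - \vec y}^2 = \alpha(t) - \beta(t)\cos(\phi_0 - \phi)$, so
\[
    \int_0^{2\pi} \frac{\d\phi}{\alpha(t) - \beta(t)\cos(\phi_0-\phi)} = \frac{2\pi}{\sqrt{\alpha(t)^2 - \beta(t)^2}} = \frac{2\pi}{\sqrt{(\alpha+\beta)(\alpha-\beta)}},
\]
a standard elementary integral. The key observation is that $\alpha(t) - \beta(t) = (R - r(t))^2 + (Z-z(t))^2$, which vanishes quadratically as $t \to t_0$, so $\sqrt{\alpha - \beta} = O(|t-t_0|)$; meanwhile $\alpha + \beta \ge 2\beta \ge 4 R R_{\min} > 0$ stays bounded below. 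Hence after integrating in $\phi$ the one-dimensional integrand is bounded below by $c\,/\,|t - t_0|$ for some $c > 0$ on a punctured neighborhood of $t_0$. Since $\int_{t_0}^{t_0 + \delta} \d t / (t - t_0) = \infty$, and the integrand is nonnegative everywhere, the full integral diverges.

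The main obstacle is making the lower bound on $\norm{\vec e_\phi(\phi)\cross(\vec x - \vec y)}$ clean enough that it survives the $\phi$-integration without introducing a spurious extra zero at $t = t_0$: naively one might bound $\norm{\vec e_\phi \cross (\vec x - \vec y)} \ge |Z - z(t)|$, but $Z - z(t)$ itself vanishes at $t_0$, which would weaken the bound to something integrable. The fix is to use instead the in-plane components: $\norm{\vec e_\phi(\phi) \cross (\vec x - \vec y)}^2 = (Z - z(t))^2 + (\text{radial part})^2$ where the radial part involves $R$ and $r(t)\cos(\phi_0 - \phi)$, and for $\phi$ ranging over an interval bounded away from $\phi_0$ (say $|\phi - \phi_0| \in [\pi/2, 3\pi/2]$) this radial part is bounded below by a positive constant $R_{\min}$-dependent quantity uniformly in $t$ near $t_0$. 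Restricting the $\phi$-integral to that sub-interval loses only a constant factor and gives a genuinely $t$-uniform positive lower bound on the cross-product norm, after which the $1/|t-t_0|$ divergence goes through cleanly. One should also note the degenerate case where $\psi_z r'(t_0) - \psi_r z'(t_0) = 0$; in a real plasma equilibrium this is the statement that the surface current does not vanish at $\vec x$, so it is legitimate to assume it, but if desired one can observe that the quantity is continuous and the conclusion holds as long as it is nonzero somewhere on $\Gamma$ in the toroidal slice, which is enough to force divergence.
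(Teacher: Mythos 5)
Your overall strategy --- pass to the parameter domain, integrate in $\phi$ first by Tonelli, and show that the resulting function of $t$ is bounded below by $c/|t-t_0|$ --- is the same as the paper's, but the step you yourself flag as the main obstacle is resolved incorrectly, and the error is fatal. Restricting the $\phi$-integral to $|\phi-\phi_0|\in[\pi/2,3\pi/2]$ does not ``lose only a constant factor'': on that set $\cos(\phi_0-\phi)\le 0$, so $\norm{\vec x-\vec y}^2=\alpha-\beta\cos(\phi_0-\phi)\ge\alpha\ge 2R_{\min}^2$ is bounded away from zero uniformly in $t$, and the restricted integral is therefore uniformly bounded in $t$ --- it cannot produce any divergence at all. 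The divergence of $\int_0^{2\pi}(\alpha-\beta\cos(\phi_0-\phi))^{-p}\,\d\phi$ for $p\ge1$ as $t\to t_0$ comes entirely from $\phi$ near $\phi_0$, which is precisely the region you excise, and precisely where your uniform lower bound on the cross-product norm fails: since $(\vec x-\vec y)\vec\cdot\vec e_\phi(\phi)=-R\sin(\phi-\phi_0)$, at $t=t_0$ one has $\norm{\vec e_\phi(\phi)\cross(\vec x-\vec y)}^2=R^2(1-\cos(\phi-\phi_0))^2$ while $\norm{\vec x-\vec y}^2=2R^2(1-\cos(\phi-\phi_0))$, so the ratio $\norm{\vec e_\phi\cross(\vec x-\vec y)}/\norm{\vec x-\vec y}$ tends to $0$ as $\phi\to\phi_0$. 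Hence your displayed pointwise bound $\ge C(t)/\norm{\vec x-\vec y}^2$ with $C$ positive near $t_0$ is false in a neighbourhood of $\phi=\phi_0$, and the subsequent evaluation $\int_0^{2\pi}\d\phi/(\alpha-\beta\cos(\phi_0-\phi))=2\pi/\sqrt{\alpha^2-\beta^2}$ rests on it.

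Ironically, the bound you discarded is the one that works. If you keep the full power $\norm{\vec x-\vec y}^{-3}$, the integral $\int_0^{2\pi}(\alpha-\beta\cos(\phi_0-\phi))^{-3/2}\,\d\phi$ scales like $c/(\alpha-\beta)\sim c/|t-t_0|^{2}$, so the crude estimate $\norm{\vec e_\phi(\phi)\cross(\vec x-\vec y)}\ge|Z-z(t)|\sim|z'(t_0)|\,|t-t_0|$ already delivers the desired $1/|t-t_0|$ lower bound whenever $z'(t_0)\ne0$; your worry that $|Z-z(t)|$ ``weakens the bound to something integrable'' arose only because you had first downgraded the denominator to $\norm{\vec x-\vec y}^{2}$. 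You would still need to treat $z'(t_0)=0$ separately --- the paper does so by switching to the $\vec e_z$ component of the $\phi$-integrated field, whose singular coefficient $(r(t)^2-R^2-(Z-z(t))^2)/(\alpha-\beta)$ is driven by $r'(t_0)\ne0$ in that case. The paper itself avoids inequalities in $\phi$ altogether: it computes $\vec F_1(t)=\int_0^{2\pi}\vec F(\phi,t)\,\d\phi$ exactly in terms of $K$ and $E$ and uses $\int_0^{2\pi}\norm{\vec F}\,\d\phi\ge\norm{\vec F_1(t)}$. Your lower-bound route is viable, but only with the corrected estimates above.
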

\begin{proof}
% The choice of norm has no bearing on the proposed result,
% since all norms are equivalent in finite dimensions.
% We shall choose the $\ell^\infty$ vector norm in our argument.
Without loss of generality, we assume that the 
coordinate system is appropriately rotated so that $\vec x$ has zero toroidal angle. That is, we assume $\vec x = (R,0,Z) = (r(t_0), 0, z(t_0))$.
In this form, we recall that $\vec e_r(0) = \vec e_x$ and $\vec e_\phi(0) = \vec e_y$.
The surface integral can be rewritten in the parameter domain as 
\[
    \iint_\Gamma \norm{
    \frac{(\vec n(\vec y) \cross \vec B^{pol}(\vec y)) 
        \cross (\vec x - \vec y)}
        {\norm{\vec x - \vec y}^3}} 
    \d \Gamma(\vec y)
    =
    \int_0^L \int_0^{2\pi} \norm{\vec F(\phi,t)} \d\phi \d t,
\]
where $\vec F$ is expressible from the parameterization representations \eqref{eq:ncrossB_param} and \eqref{eq:diffnorm} as
\begin{align*}
    \vec F(\phi,t) 
    &= \frac{
        \left[\pder\psi z r'(t) - \pder\psi r z'(t) \right] \vec e_\phi(\phi)
        \cross
        \left[(R\vec e_x + Z\vec e_z) - (r(t)\vec e_r(\phi) + z(t)\vec e_z)\right]
    }
    {(\alpha(t) - \beta(t) \cos\phi)^{3/2}} \\
    &= \frac{\left[\pder\psi z r'(t) - \pder\psi r z'(t) \right]}
        {(\alpha(t)-\beta(t)\cos\phi)^{3/2}}
    \Big( 
        (Z -  z(t))
        \vec e_r(\phi)
    + \left(
        r(t)-R\cos\phi
        \right) \vec e_z
    \Big).
\end{align*}

By Tonelli's theorem, we may evaluate the integral of 
$\norm{\vec F}$ in $\phi$ first,
and show that the remaining integral in $t$ diverges. The integrands in $\phi$ can be transformed into expressions with formulae known from \citet[2.584-37 and 2.584-42]{gradshteyn2014table}, by a process identical to how we obtained \eqref{eq:gr_transform} in the proof of theorem \ref{th:vecpot}. The result is the univariate, vector-valued function
\begin{align*}
    \vec F_1(t) = &\int_{0}^{2\pi} \vec F(\phi,t) \d \phi\\
    = &\frac {2 \left[
        \pder\psi z r'(t) - \pder\psi r z'(t)
        \right]}
        {r(t) \sqrt{\alpha + \beta}}
    \Bigg\{ \frac {Z - z(t)}{R} 
        \left( 
        -K(k^2) 
        + \frac{\alpha}{\alpha - \beta} E(k^2) 
        \right) \vec e_x \\
    &\qquad\quad + \left( 
        K(k^2) 
        + \frac{r(t)^2 - R^2 - (Z-z(t))^2}{\alpha - \beta} E(k^2) 
        \right) \vec e_z \Bigg\}.
\end{align*}
As before, we have introduced the quantities
\[
    \begin{cases}
        \alpha  = \alpha(t; \vec x) = R^2 + r(t)^2 + (Z-z(t))^2 \\
        \beta   = \beta(t; \vec x)  = 2 R r(t) \\
        k^2     = k(t; \vec x)^2    = \displaystyle \frac{2\beta}{\alpha+\beta}.
    \end{cases}
\]
% \begin{alignat*}{2}
%     \vec F_1(t) := &\int_{-\pi}^\pi \vec F(\theta,t) \d \theta \\
%     = &\frac{4 C(t)}{(\alpha(t) + \beta(t))^{3/2}}
%         \Bigg\{\: 
%         &&(Z - z(t))
%         \left( 
%         \left( \frac 2{k(t)^2} + \frac 1{1-k(t)^2} \right) E(k^2)
%         -\frac 2{k(t)^2} K(k^2)
%         \right) \vec e_r \\
%     & &&+ 
%     \left(
%     \left(\frac{r(t) - R}{1 - k(t)^2} - \frac{2R}{k(t)^2} \right) E(k(t)^2)
%     + \frac{2R}{k(t)^2} K(k(t)^2)
%     \right) \vec e_z \: \Bigg\} \d  t,
% \end{alignat*}
By the immediate comparison 
$\int_0^{2\pi} \|\vec F(\phi,t)\|\d \phi \ge \|\vec F_1(t)\|$, it
is sufficient to prove our claim by showing that $\|\vec F_1(t)\|$ 
is not integrable.
We will show that a singularity in one of the components
of $\vec F_1(t)$ must be at least as severe as $|t-t_0|^{-1}$ as $t \to 0$,
and this will prove that $\|\vec F(\theta, t)\|$ is not integrable.

First, we consider integrating $\vec e_x \vec\cdot \vec F_1(t)$ 
with the purely formal expression
\[
    \int_0^L \frac {2(Z - z(t))
        \left[\pder\psi z r'(t) - \pder\psi r z'(t) \right]}
        {R r(t) \sqrt{\alpha + \beta}}
    \left( -K(k^2) 
    + \frac{\alpha}{\alpha - \beta} E(k^2) 
    \right) \d  t.
\]
For any geometry where $r(t) \ge R_\text{min} > 0$, and for generic functions $\psi(r,z)$, the quantity
\[
    \lim_{t\to t_0}
    \frac {2 \left[\pder\psi z r'(t) - \pder\psi r z'(t) \right]}
        {R r(t) \sqrt{\alpha + \beta}}
    =
    \frac 1{R^3}
    \left[\pder\psi z r'(t) - \pder\psi r z'(t) \right]_{t=t_0} 
\]
is finite. Moreover, it is also nonzero since for any 
valid parameterization, the derivatives 
$r'(t)$ and $z'(t)$ cannot concurrently vanish for any 
fixed $t$, including $t=t_0$.
So, the behavior of the singularity in the integrand
$\vec e_x \cdot \vec F_1(t)$ depends purely on what remains.

The first term $(Z-z(t))K(k^2)$ is clearly integrable,
since $|Z-z(t)| = O(|t-t_0|)$ and $K(k(t)^2) = -\log|t-t_0| + O(1)$ 
as $t \to t_0$ by lemma \ref{lem:K_asymptotics}.
For the second term, we observe that the limit
\[
    \lim_{t\to t_0} \alpha E(k(t)^2)
    = 2 R^2 E(1) 
    = 2 R^2
\]
is again finite and nonzero, and so we are left to question:
How severe is the singularity of
\[
    \frac{Z - z(t)}{\alpha - \beta}
    =
    \frac{Z-z(t)}{(R-r(t))^2 + (Z-z(t))^2}
\]
as $t \to t_0$? Since the parameterizations $r(t)$ and 
$z(t)$ are continuously differentiable functions, 
we may write Taylor expansions
\begin{align*}
    r(t) &= r(t_0) + (t-t_0) r'(s_r) 
        = R + (t-t_0) r'(s_r) 
        \text{ and} \\
    z(t) &= z(t_0) + (t-t_0) z'(s_z) 
        = Z + (t-t_0) z'(s_z)
\end{align*}
for some values $s_r,s_z$ between
$t_0$ and $t$, which depend on $t$ and which tend to $t_0$ 
as $t \to t_0$. 
It immediately follows that, for fixed $t$, we have
\[
    \frac{Z-z(t)}{(R-r(t))^2 + (Z-z(t))^2} = 
    \frac{-(t-t_0)z'(s_z)}{(t-t_0)^2 (r'(s_r)^2 + z'(s_z)^2) } = 
    \left(\frac 1{t-t_0} \right)
    \left(\frac{-z'(s_z)}{r'(s_r)^2 + z'(s_z)^2} \right).
\]
% Moreover,
% $r'$ and $z'$ must be bounded on the periodic domain $[0,L]$.
As long as $z'(t_0) \ne 0$, 
% which corresponds with $Z \ne Z_\text{max}, Z_\text{min}$,
we obtain the answer to our question. The integrand
obeys the asymptotic estimate
\[
    \vec e_x \cdot \vec F_1(t) \sim \frac 1{t-t_0} \qquad 
    \text{as } t \to t_0,
\]
and we conclude that $\|\vec F_1(t)\|$, and hence $\|\vec F(\theta, t)\|$,
are not integrable.

When $z'(t_0) = 0$, we consider the integral of the other component
$\vec e_z \cdot \vec F_1(t)$ and consider whether
\[
    \int_0^L
    \frac {2 \left[\pder\psi z r'(t) - \pder\psi r z'(t) \right]}
        {r(t) \sqrt{\alpha + \beta}} 
    \left( K(k^2) + \frac{r(t)^2 - R^2 - (Z-z(t))^2}
        {\alpha - \beta} E(k^2) \right) \d t
\]
diverges. Using an argument verbatim to the first part of
this proof, we conclude that its behavior is determined
by the singularity of 
\[
    \frac{r(t)^2 - R^2 - (Z-z(t))^2} {\alpha - \beta}
    =
    \frac{r(t)^2 - R^2 - (Z-z(t))^2} {(R-r(t))^2 + (Z-z(t))^2}.
\]
Using the same Taylor expansions, we obtain
\begin{align*}
    \frac{r(t)^2 - R^2 - (Z-z(t))^2} {(R-r(t))^2 + (Z-z(t))^2}
    &= \frac{2 R (t-t_0) r'(s_r) + (t-t_0)^2(r'(s_r)^2 - z'(s_z)^2)} 
        {(t-t_0)^2(r'(s_r)^2 + z'(s_z)^2)} \\
    &= \left( \frac 1{t-t_0} \right) 
        \frac{2 R r'(s_r)}{r'(s_r)^2 + z'(s_z)^2}
        + \frac{r'(s_r)^2 - z'(s_z)^2}{r'(s_r)^2 + z'(s_z)^2}.
\end{align*}
With identical reasoning, and considering that $r'$ and $z'$
cannot simultaneously vanish, we conclude once again that
$\|\vec F(\theta, t)\|$ is not integrable.
\end{proof}

As a result of theorem \ref{th:need_pv}, we conclude that one must use a principal value procedure in order to define all components of $\vec B^{pol}_V(\vec x)$ by the virtual casing principle \eqref{eq:virtual_casing} when $\vec x \in \Gamma$. One can prove that this is possible, i.e., that a limiting procedure yields a finite result, but we omit the lengthy details here.

\end{document}